\providecommand{\U}[1]{\protect\rule{.1in}{.1in}}
\newtheorem{theorem}{Theorem}
\newtheorem{defi}[theorem]{Definition}
\newtheorem{idea memo}[theorem]{Idea Memo}
\newtheorem{remark}[theorem]{Remark}
\newenvironment{proof}[1][Proof]{\textbf{#1.}}{\ \rule{0.5em}{0.5em}}
\title{Quantum Walk and Dressed Photon}
\author{Misa Hamano
\institute{Nagahama Institute of Bio-Science and Technology\\ Nagahama, Japan}
\email{misa.flute617klfx@gmail.com}
\and
Hayato Saigo
\institute{Nagahama Institute of Bio-Science and Technology\\ Nagahama, Japan}
\email{harmoniahayato@gmail.com}
}
\begin{document}
\maketitle

\begin{abstract}
A physical model called dressed photon, a composite system of photons and excitation of electrons in the nano-particles, is effectively used in the realm of near-field optics. Many interesting behaviors of dressed photons are known, especially the rapid energy transfer and the accumulation to singular points, e.g., points with strong dissipation. We propose a new modelling of dressed photons based on quantum walks, especially Grover walks on semi-infinite graphs which we call jellyfish graphs, and show a universal accumulation phenomena around the point with strong dissipation.
\end{abstract}

\section{Introduction}
The subject of this paper is to analyze the behavior of dressed photons \cite{O} using quantum walks. In this section, we briefly review what quantum walks and 
dressed photons are, and then describes the structure of this paper.

Quantum walks are, in a nutshell, "quantum version of the random walk" (see \cite{A}, for example). More specifically, we generally consider the complex "probability amplitudes" and the unitary evolution of them. Here, the probability amplitude is a quantity which the probability can be obtained by squaring its absolute value, and the unitary evolution is the time evolution by a reversible linear mapping which keeps "sum of the squares of the absolute values of the probability amplitudes". 

In general, the sum of the probabilities is normalized to be 1, but it is convenient not to normalize such a value when the quantum beam is constantly incident and emitted as the phenomena focused in this paper, and  
we simply call them "amplitudes". Where the square of the absolute value of amplitudes is high, the probability is relatively high.

Quantum walks seem to have an analogous definition to random walks, but in reality there are major differences in nature. For example, the average travel distance of a random walk is generally proportional to the "square root" of time, while that of a quantum walk is proportional to the time itself. Quantum walks are "rapid" compared to random walks. Because of this nature, quantum walks are considered as the basis for more efficient search and computing. 

The quantum walk can also be seen as a toy model of the quantum field (see \cite{D}, for example). As is well known, quantum phenomena should be considered as the behaviors of fields as system with spatial and temporal extent, but in measurement they appear as local and particle-like phenomena. Mathematical models for such system are called "quantum fields", which is defined as a system of quantities spread from place to place, from which the expectation values of the measurements of those quantities are determined (for each interaction with the environment). Such a correspondence from quantity to expectation value is called "state". It is known that any state in this meaning can generally be expressed as a vector of Hilbert space. In short, a quantum field is considered as a system composed of algebras of physical quantities and states (or "local states" \cite{OOS}) on them. 

Quantum fields defined in this way are the most central concept in modern physics, but their analysis requires sophisticated mathematical techniques. It is also known that, apart from free fields, no such non-trivial mathematical model can be constucted that satisfies the "seemingly natural" axioms.

Therefore, 
it is interesting to 
use "Quantum walk on the graph" in which the amplitude is defined on each arrow at each time as a simple model, breaking the basic properties assumed in the conventional quantum field theory.
In particular, this modeling is suitable for modeling the light as a quantum field that interacts with nanoparticles called "dressed photon".

The dressed photon \cite{O} is a useful concept to consider the behavior of a quantum field formed by combining a photon field and an electron (excitation of) field of a nanoparticle. It has an overwhelmingly larger number of modes (momentum and energy) than the incoming light itself. Of these modes, the momentum and energy "relatively high" modes manifest themselves as "a small grain of light" localized to the "interval" of the nanoparticle. This is the dressed photon ("relatively low" modes are thought to function as a kind of heat bath). Simply put, the dressed photons are a new kind of quantum created "between" nanoparticles when light combines with the nanoparticle system.


The main subject of this paper is to analyze the behavior of dressed photons using "Grover walks on jellyfish graphs" as a first step, or a toy model, toward such a theoretical foundation. After the preliminaries (section 2), modelling based on concepts such as jellyfish graphs and Grover walks on them are introduced in section 3. In section 4, the main theorem of the present paper are introduced based on these preparations. In Section 5, we discuss the physical meaning of the theorem. 

\section{Preliminaries}

\subsection{Directed Graph}

Simply put, a directed graph is a figure consisting of "vertices" and "arrows" connecting them. Mathematically correct definition is as follows.

\begin{defi}[Directed Graph]
A directed graph (or a digraph) $G$ is a quadruple  
$(V_G,A_G,o_G,t_G)$ composed of a set $V_G$, a set $A_G$, 
a mapping $o_G:A_G\longrightarrow V_G$ and 
a mapping $t_G:A_G\longrightarrow V_G$. 
The elements of $V_G$ and of $A_G$ are called vertices and arrows, respectively. 
For an arrow $a$, $o_G(a)$ is called the origin of $a$ and $t_G(a)$ is called the target of $a$.
\end{defi}


With regard to directed graphs, the concept of "path" is important.

\begin{defi}
A path in a directed graph $G$ is a (finite) sequence of arrows $a_1,a_2,...,a_n$ in $G$ such that $o_G(a_{i+1})=t_G(a_i)$ holds for any $i$. $o(a_1)$ is called the origin of the path and $t(a_n)$ is called the target of the path. 
\end{defi}

In this paper, "connected" directed graphs play an important role. The connectivity of a graph is defined as follows using the concept of "path".

\begin{defi}
A directed graph is called connected if for any two vertices $v,v'$ there is some path whose origin is $v$ and whose target is $v'$. \footnote{Although the notion of connectivity defined here is usually called "strongly connected", no confusion will be occurred by this abuse of the term for the kinds of graphs treated in the present paper.}.
\end{defi}

\subsection{Symmetric Simple Directed Graph}
This paper deals with "symmetric simple directed graphs". The symmetry and the simplicity of directed graphs are defined as follows:



\begin{defi}
A directed graph $G=(V_G,A_G,o_G,t_G)$ is called symmetric if there exist a mapping $\bar{(\: )}:A_G \longrightarrow A_G$ such that 
$o_G(\bar{a})=t_G(a), t_G(\bar{a})=o_G(a)$ holds.
\end{defi}


\begin{defi}
A directed graph $G=(V_G,A_G,o_G,t_G)$ is called simple 
if for any $v,v'\in V_G$ there exists at most one 
$a\in A_G $ such that 
$v=o_G(a), v'=t_G(a)$ hold and for any $a\in A_G $,
$o_G(a)\neq t_G(a)$ holds.
\end{defi}


\textbf{Hereafter, in this paper, "a graph" simply means "a symmetric simple directed graph".} 
Such graph can also be considered as 
"a directed graph obtained from simple undirected graph by replacing each edge by the pair of opposite arrows".  The notion of simple undirescted graph can be defined as follows:


\begin{defi}
A pair of sets 
$\Gamma =(V_{\Gamma},E_{\Gamma})$ is called a simple undirected graph if any of elements of $E_{\Gamma}$ is a set composed of two elements of $V_{\Gamma}$. An element of $V_{\Gamma}$ is called a vertex of $\Gamma$, and an element of $E_{\Gamma}$ is called an edge of $\Gamma$. An edge $e$ is called incident to a vertex $v$ if $v\in e$. The number of edges incident to a vertex $v$ is called the degree of $v$ and denoted as $deg(v)$.
\end{defi}






\textbf{Since the graphs (symmetric simple directed graphs) can be considered as essentially the same as simple undirected graphs, we apply the terms originally defined for the latter to the former. 
Also, we often omit the indices indicating the name of graphs, especially for the targets and origins.  }

\section{Modelling}

\subsection{Jellyfish Graphs}

The main focus of this paper is quantum walks on a kind of graphs, called ``jellyfish graphs'' defined below, which is useful for constructing a toy model of the dressed photon phenomenon. 
In short, a jellyfish graph is a graph composed of a finite connected graph and a finite number of "half-line" attached to it. More precisely, it is defined as follows.

\begin{defi}
A graph $G$ is called a jellyfish graph if it is the union of a finite number of graphs 
\[G^{(0)},l^{(1)},l^{(2)},...\]
which satisfy the following conditions:
\begin{itemize}
    \item $G^{(0)}$ is finite connected graph, i.e., connected graph composed of a finite number of vertices and arrows. For simplicity, we identify the set of vertices with the set $\{1,2,3,...,n\}$, where $n$ denotes the number of vertices.
    
    \item Each $l^{(i)}$ is a half-line graph, i.e., a connected graph such that degree of any vertex $v$ is $2$ except for one vertex called the endvertex of $l^{(i)}$ which is the unique common vertex of $l^{(i)}$ and $G^{(0)}$. 
    
\end{itemize}
\end{defi}

\subsection{Quantum Walks on Jellyfish Graphs}

A toy model of dressed photon phenomena can be formulated as a quantum walk on a jellyfish graph, based on the picture that the light is injected to a nanoparticle system from a distance, and it becomes a dressed photon between the nanoparticles for a while, and then emitted as light from there.

Although the term "jellyfish graph" itself is introduced in this paper, 
Feldman and Hillery already considered the discrete time quantum walk on this kind of graph\cite{FH} and continuous time quantum walk had been considered by Farhi and Gutmann\cite{FG}. Higuchi and Segawa \cite{HS} also study this type of graph and showed the following theorem (We modify the statement with our terminology).

\begin{theorem}
Consider a quantum walk on jellyfish graph being free on each half-line $l^{(i)}$, i.e., a quantum walk satisfying the condition that for any arrow $a$ in $l^{(i)}$ and any arrow $a'$ in $l^{(i)}$ such that $o(a')=t(a), t(a')\neq o(a)$
\[
\psi_{t+1}(a')=\psi_{t}(a)
\]
holds, where $\psi_{t}(a)$ and $\psi_{t+1}(a')$ denotes the amplitude of $a$ and $a'$ at the time $t$ and $t+1$, respectively.

If such quantum walk starts from the initial condition with
\begin{itemize}
    \item constant amplitudes ${\alpha}^{(i)}$ on the arrows in each half-line $l^{(i)}$ directed to the end vertex
    \item amplitude $0$ on other arrows
\end{itemize}
then the amplitude ${\psi}_{t}(a)$ converges to the limit amplitude ${\psi}_{\infty}(a)$ when $t$ tends to infinity.
 \end{theorem}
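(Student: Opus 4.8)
The plan is to reduce the infinite-dimensional evolution to a finite-dimensional affine recursion on the arrows of the core $G^{(0)}$, and then to prove convergence of that recursion by controlling the spectrum of the associated contraction. The first step exploits the freeness condition. Labelling the vertices of $l^{(i)}$ as $v_0^{(i)}$ (the endvertex), $v_1^{(i)},v_2^{(i)},\dots$ outward, the relation $\psi_{t+1}(a')=\psi_t(a)$ transports amplitudes inward along the inward-directed arcs and outward along the outward-directed arcs at unit speed. Since the initial state is the constant $\alpha^{(i)}$ on every inward arc and $0$ elsewhere, an immediate induction gives that every inward arc keeps the value $\alpha^{(i)}$ for all $t$, while each outward arc at position $k$ is a delayed copy of the first outward arc $e_1^{(i)}=(v_0^{(i)}\to v_1^{(i)})$, namely $\psi_t(e_k^{(i)})=\psi_{t-(k-1)}(e_1^{(i)})$ for $t\ge k-1$ and $0$ before. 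Thus the half-lines act as a constant source feeding $\alpha^{(i)}$ into $G^{(0)}$ together with a delay line that carries amplitude away, and convergence of the whole state reduces to convergence of the amplitudes on the finitely many arcs of $G^{(0)}$ (the outward arcs then inherit the common limit of $e_1^{(i)}$).

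Let $A_{int}$ be the finite set of arcs of $G^{(0)}$, let $P_{int}$ be the orthogonal projection onto the finite-dimensional space supported on $A_{int}$, and set $\phi_t=P_{int}\psi_t$. Because the one-step unitary $U$ is local and the only arcs outside $A_{int}$ flowing into a vertex of $G^{(0)}$ are the inward half-line arcs (all carrying $\alpha^{(i)}$), the interior amplitudes obey an affine recursion $\phi_{t+1}=M\phi_t+c$, where $M=P_{int}U|_{A_{int}}$ and $c=P_{int}U\psi_\alpha$, with $\psi_\alpha$ the state equal to $\alpha^{(i)}$ on the inward boundary arcs and $0$ elsewhere. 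Since the initial interior amplitudes vanish, $\phi_0=0$, so $\phi_t=\sum_{s=0}^{t-1}M^s c$, and the problem becomes convergence of this partial sum. Note that $M$ is a contraction, because $\|M\phi\|=\|P_{int}U\phi\|\le\|U\phi\|=\|\phi\|$.

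The main obstacle is the part of the spectrum of $M$ on the unit circle, which is exactly where undamped oscillation (or linear growth) could break convergence. I would invoke the canonical orthogonal, $M$-reducing decomposition of a finite-dimensional contraction, $\ell^2(A_{int})=\mathcal H_u\oplus\mathcal H_{cnu}$, where $M|_{\mathcal H_u}$ is unitary and the spectral radius of $M|_{\mathcal H_{cnu}}$ is strictly less than $1$; on $\mathcal H_{cnu}$ the Neumann series $\sum_s M^s c$ converges, so everything reduces to showing that the $\mathcal H_u$-component of $c$ vanishes. The decisive point is that any unimodular eigenvector $\phi$ of $M$ (eigenvalue $\lambda$, $|\lambda|=1$) is actually an eigenvector of the full unitary $U$ supported on $A_{int}$: from $\|P_{int}U\phi\|=|\lambda|\,\|\phi\|=\|U\phi\|$ and $P_{int}$ being an orthogonal projection one gets that $U\phi$ is supported on $A_{int}$, whence $U\phi=\lambda\phi$. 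Such a bound state is confined to $G^{(0)}$, hence orthogonal to $\psi_\alpha$ (which lives on the inward half-line arcs), so that
\[
\langle\phi,c\rangle=\langle\phi,U\psi_\alpha\rangle=\langle U^{*}\phi,\psi_\alpha\rangle=\lambda\langle\phi,\psi_\alpha\rangle=0 .
\]
As $\mathcal H_u$ is spanned by such eigenvectors, $c\perp\mathcal H_u$; therefore only the $\mathcal H_{cnu}$-part contributes and $\phi_t\to\sum_{s\ge 0}M^s c$. Together with the first step this yields convergence of $\psi_t(a)$ on every arc, proving the theorem. The two points requiring careful verification are the compression/locality identity $M=P_{int}U|_{A_{int}}$ for the walk convention in use, and the standard fact that a finite-dimensional contraction splits orthogonally into a unitary part and a completely non-unitary part on which the spectral radius is below $1$.
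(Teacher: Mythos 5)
The paper does not actually prove this theorem: it is quoted from Higuchi and Segawa \cite{HS} without proof, so there is no in-paper argument to compare against. Judged on its own, your proposal is correct and is essentially the standard route (and the one underlying the cited work): the freeness condition pins the inward half-line arcs at $\alpha^{(i)}$ for all time and makes the outward arcs delayed copies of the first outgoing arc, reducing everything to the affine recursion $\phi_{t+1}=M\phi_t+c$ on the finitely many interior arcs; the compression $M$ of the unitary is a contraction, its unimodular eigenvectors are genuine eigenvectors of $U$ confined to $G^{(0)}$ and hence orthogonal to the source term $c$, and the completely non-unitary part has spectral radius strictly below $1$ in finite dimensions, so the Neumann series converges. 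The two points you flag for verification are exactly the right ones, and both check out: the locality of the coin at each vertex of $G^{(0)}$ gives $M=P_{int}U|_{A_{int}}$ with the only exterior inputs being the constant inward arcs, and the equality case of Cauchy--Schwarz shows that a unimodular eigenvector of a contraction is a joint eigenvector of $M$ and $M^{*}$ (ruling out Jordan blocks as well), which yields the reducing unitary/c.n.u.\ splitting you invoke. One cosmetic caveat: the global state is not square-summable, but your argument never needs it to be, since $M$ and $c$ only involve vectors supported on finitely many arcs.
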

 
\subsection{Grover Walks on Jellyfish Graphs} 

The theorem above holds for general quantum walks. If we focus on the typical concrete quantum walks called Grover walks, we obtain more detailed results. Here, Grover walk means the quantum walk such that the amplitudes in time $t$ of arrows $a,a',a'',...$ with the common target determines the ones in time $t+1$ of $\bar{a},\bar{a'},\bar{a''},...$, and the relation ship between these amplitudes are given by

\[
\begin{bmatrix}
\beta \\
{\beta}' \\
{\beta}'' \\
\vdots \\
\end{bmatrix}
=
\begin{bmatrix}
2/r-1 & 2/r & \cdots & 2/r \\
2/r   & 2/r-1 & \cdots & 2/r \\
\vdots  & \vdots & \ddots & \vdots \\
2/r   & 2/r &  \cdots & 2/r-1
\end{bmatrix}
\begin{bmatrix}
\alpha \\
{\alpha}' \\
{\alpha}'' \\
\vdots \\
\end{bmatrix}
\]
where $r$ denotes the degree of the common vertex. 



Physically speaking, the idea of a Grover walk corresponds to the idea of "scattering in the absence of potential" but it is also considered to be the first step in considering the energy transport of dressed photons (In the sense that we first consider the case where there is no potential, and then make corrections as necessary).

Higuchi and Segawa \cite{HS} showed the following fundamental theorem about the limit amplitudes of the Grover walks on jellyfish graphs.

\begin{theorem}
Let ${\alpha}^{(i)}$ denotes the constant amplitude of $a$ in $l^{(i)}$ and ${\beta}^{(i)}={\psi}_{\infty}(\bar{a})$ denotes the limit amplitude of $a$, which is shown to be constant for each $l^{(i)}$. The relation between these amplitudes are given as follows:

\[
\begin{bmatrix}
{\beta}^{(1)} \\
{\beta}^{(2)} \\
{\beta}^{(3)} \\
\vdots \\
\end{bmatrix}
=
\begin{bmatrix}
2/r-1 & 2/r & \cdots & 2/r \\
2/r   & 2/r-1 & \cdots & 2/r \\
\vdots  & \vdots & \ddots & \vdots \\
2/r   & 2/r &  \cdots & 2/r-1
\end{bmatrix}
\begin{bmatrix}
{\alpha}^{(1)}\\
{\alpha}^{(2)} \\
{\alpha}^{(3)}\\
\vdots \\
\end{bmatrix}
\]
\end{theorem}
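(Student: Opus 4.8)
The plan is to characterize the limit configuration $\psi_\infty$ purely as a stationary state of the Grover evolution and to read off the matrix relation from two linear constraints, one local and one global. Because the one-step update at an arrow $a'$ depends only on the finitely many arrows sharing the target $o(a')$, the pointwise convergence $\psi_t(a)\to\psi_\infty(a)$ guaranteed by the previous theorem passes through the update, so $\psi_\infty$ satisfies the stationary equation at every vertex: writing $S_v=\sum_{b:\,t(b)=v}\psi_\infty(b)$ for the total incoming amplitude at $v$ and $d_v$ for its degree, the Grover rule reads $\psi_\infty(\bar a)=\frac{2}{d_v}S_v-\psi_\infty(a)$ for every arrow $a$ with $t(a)=v$. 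On a half-line every interior vertex has degree $2$, where the Grover coin is the swap; stationarity then forces the inward and outward amplitudes to be constant along each $l^{(i)}$, equal to the injected value $\alpha^{(i)}$ and to $\beta^{(i)}=\psi_\infty(\bar a)$ respectively, so that the two tail amplitudes are well defined.

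For the local constraint I would, for each edge $\{v,w\}$ with arrows $a$ (from $w$ to $v$) and $\bar a$, add the stationary equation read from $v$ and the one read from $w$. Both express the same quantity $\psi_\infty(a)+\psi_\infty(\bar a)$, the first as $\frac{2}{d_v}S_v$ and the second as $\frac{2}{d_w}S_w$. Hence $S_v/d_v=S_w/d_w$ across every edge, and since the jellyfish graph is connected this ratio is a single global constant $\mu$. Applying this to a degree-$2$ vertex on $l^{(i)}$ gives $\alpha^{(i)}+\beta^{(i)}=2\mu$, so that $\alpha^{(i)}+\beta^{(i)}$ is independent of $i$.

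For the global constraint I would use that at each vertex the total outgoing amplitude equals the total incoming amplitude, which is immediate since each column of the Grover coin sums to $1$. Summing this identity over the finitely many vertices of $G^{(0)}$, the arrows internal to $G^{(0)}$ contribute to both sides and cancel, while the boundary arrows contribute exactly $\sum_i\alpha^{(i)}$ on the incoming side and $\sum_i\beta^{(i)}$ on the outgoing side; hence $\sum_i\alpha^{(i)}=\sum_i\beta^{(i)}$, a conservation of total amplitude flux through the tails. Combining $\alpha^{(i)}+\beta^{(i)}=2\mu$ for all $i$ with this flux balance and summing over the $r$ half-lines yields $\mu=\frac{1}{r}\sum_j\alpha^{(j)}$, whence $\beta^{(i)}=\frac{2}{r}\sum_j\alpha^{(j)}-\alpha^{(i)}$, which is precisely the asserted Grover-matrix identity with $r$ the number of half-lines.

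The one genuinely delicate point I expect is the passage from the convergence supplied by the previous theorem to a bona fide vertex-wise stationary configuration with honestly constant tail amplitudes, since the graph is infinite and the state is not $\ell^2$-normalized; once that is secured, everything else is linear algebra. I would also emphasize that the universality of the result — that only the number $r$ of tails and not the internal geometry of $G^{(0)}$ survives — is exactly what the cancellation of the internal arrows in the global step makes transparent.
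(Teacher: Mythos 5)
The paper does not actually prove this statement: it is quoted from Higuchi and Segawa \cite{HS} and used as a black box, so there is no internal proof to compare against. Judged on its own, your argument is correct and complete modulo the convergence supplied by the preceding theorem, which you are entitled to assume and which you rightly flag as the only analytically delicate input. Your two constraints do exactly what is needed: stationarity of $\psi_\infty$ (legitimate, because the update at any arrow is a finite sum, so pointwise convergence passes through it) gives $\psi_\infty(a)+\psi_\infty(\bar a)=\tfrac{2}{d_v}S_v=\tfrac{2}{d_w}S_w$ across each edge, whence by connectivity a global constant $2\mu$ with $\alpha^{(i)}+\beta^{(i)}=2\mu$; the column sums of the Grover coin being $1$ gives flux conservation at each vertex, and summing over the finite vertex set of $G^{(0)}$ cancels the internal arrows and leaves $\sum_i\alpha^{(i)}=\sum_i\beta^{(i)}$; together these force $\mu=\tfrac1r\sum_j\alpha^{(j)}$ and $\beta^{(i)}=\tfrac2r\sum_j\alpha^{(j)}-\alpha^{(i)}$, which is the asserted matrix. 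Your derivation of the constancy of $\beta^{(i)}$ along each tail from stationarity at the degree-$2$ vertices also covers the ``which is shown to be constant'' clause of the statement. Note that your edge-wise constant $2\mu$ is exactly $2\,ave(\alpha^{(1)},\dots,\alpha^{(m)})$, consistent with the Kirchhoff-type theorem ($J(a)+J(\bar a)=0$) quoted later in the paper, which is a reassuring cross-check. Compared with the scattering/spectral analysis in \cite{HS}, your route is more elementary and makes the universality of the result (only the number of tails matters, not the internal geometry of $G^{(0)}$) visible as a cancellation; what it does not provide is the convergence itself, which is where the real work in \cite{HS} lies.
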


Simply put, this theorem means that when viewed "from far away and after enough time", the nanoparticle system itself appears to be as if a "vertex" of the Grover walk, with the half-line viewed as if the arrows incident to the vertex.

In \cite{HS}, the following "Kirchhoff type" theorem is also proved:

\begin{theorem}
Let $J(a)$ be the quantity defined as 
\[J(a):=\psi_{\infty}(a)-ave(\alpha^{(1)},\alpha^{(2)},\dotsb ,\alpha^{(m)}),
\]
where $ave(\alpha^{(1)},\alpha^{(2)},\dotsb ,\alpha^{(m)})$ denotes the average of $\alpha^{(1)},\alpha^{(2)},\dotsb ,\alpha^{(m)}$.
The following equations hold:
\begin{itemize}
    \item $J(a)+J(\bar{a})=0.$
    
    \item For any $v\in V_G$, $\displaystyle \sum_{a\in A_G, t(a)=v} J(a)=0.$
\end{itemize}

\end{theorem}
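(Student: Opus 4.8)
The plan is to read off everything from the stationarity of the limit amplitude $\psi_\infty$ under the Grover evolution, and to recognize that stationarity forces a discrete harmonicity which is exactly the Kirchhoff law for $J$. First I would record the pointwise fixed-point relation. The convergence theorem for free quantum walks stated above applies to the Grover walk (on a degree-$2$ vertex the Grover coin is the swap, i.e.\ the free condition), so $\psi_t(a)\to\psi_\infty(a)$ for every arrow $a$. The Grover update at a vertex $v$ of degree $r_v$ reads
\[
\psi_{t+1}(\bar a)=\frac{2}{r_v}\sum_{b:\,t(b)=v}\psi_t(b)-\psi_t(a),
\]
and involves only the finitely many arrows incident to $v$; passing to the limit arrowwise gives, for every arrow $a$ with $t(a)=v$,
\[
\psi_\infty(a)+\psi_\infty(\bar a)=\frac{2}{r_v}S_v,\qquad S_v:=\sum_{b:\,t(b)=v}\psi_\infty(b).
\]

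Next I would prove that $S_v/r_v$ is constant over the whole graph. Let $a$ be any arrow with $o(a)=v'$ and $t(a)=v$. Applying the fixed-point relation at $v$ to $a$, and at $v'$ to $\bar a$ (using $t(\bar a)=v'$ and $\bar{\bar a}=a$), yields $\frac{2}{r_v}S_v=\psi_\infty(a)+\psi_\infty(\bar a)=\frac{2}{r_{v'}}S_{v'}$, so $S_v/r_v=S_{v'}/r_{v'}$ across every edge. Since a jellyfish graph is connected, this forces $S_v/r_v\equiv c$ for a single constant $c$.

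I would then identify $c$ with $\bar\alpha:=ave(\alpha^{(1)},\dotsb,\alpha^{(m)})$ and conclude. Evaluating $c$ at an interior (degree-$2$) vertex of the half-line $l^{(i)}$, whose two incoming arrows carry the inward value $\alpha^{(i)}$ and the outward limit value $\beta^{(i)}$, gives $\alpha^{(i)}+\beta^{(i)}=2c$ for each $i$. Summing over $i$ and using that each column of the Grover matrix in the preceding theorem sums to $1$ (so that $\sum_i\beta^{(i)}=\sum_i\alpha^{(i)}$) gives $2mc=2\sum_i\alpha^{(i)}$, hence $c=\bar\alpha$. The two assertions now follow: for the second bullet,
\[
\sum_{a:\,t(a)=v}J(a)=S_v-r_v\bar\alpha=r_v(c-\bar\alpha)=0,
\]
and for the first, the fixed-point relation gives
\[
J(a)+J(\bar a)=\psi_\infty(a)+\psi_\infty(\bar a)-2\bar\alpha=\frac{2}{r_v}S_v-2\bar\alpha=2c-2\bar\alpha=0.
\]

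I expect the main obstacle to be the identification of the constant $c$. The harmonicity $S_v/r_v\equiv c$ is essentially automatic from stationarity, but pinning $c$ to exactly the input average $\bar\alpha$ requires the global balance $\sum_i\beta^{(i)}=\sum_i\alpha^{(i)}$, for which I rely on the (signed) column-stochastic structure of the Grover matrix supplied by the preceding theorem; an alternative would be to prove this balance directly as a global inflow/outflow conservation. Some care is also needed to justify the arrowwise passage to the limit in the first step, but this is routine given the convergence theorem and the locality of the Grover update.
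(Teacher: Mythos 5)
Your argument is correct, but there is nothing in the paper to compare it against: the paper states this ``Kirchhoff type'' theorem as a result imported from Higuchi--Segawa \cite{HS} and gives no proof of it. What you supply is a self-contained derivation from the two ingredients the paper does state, namely the convergence theorem (Theorem 1, which applies because the degree-$2$ Grover coin $\bigl[\begin{smallmatrix}0&1\\1&0\end{smallmatrix}\bigr]$ is exactly the free condition) and the far-field scattering relation (Theorem 2). Your key step --- passing to the limit in the local Grover update to get $\psi_\infty(a)+\psi_\infty(\bar a)=\tfrac{2}{r_v}S_v$, propagating the harmonic quantity $S_v/r_v$ across edges via $\bar{\bar a}=a$ (which holds here by simplicity), and using connectedness of the jellyfish graph to get a global constant $c$ --- is sound, and both bullets then reduce to $c=\bar\alpha$. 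The identification of $c$ is indeed the only delicate point; your route via the unit column sums of the $m\times m$ matrix in Theorem 2 works, and an even more direct variant is to read off $\beta^{(i)}=2\bar\alpha-\alpha^{(i)}$ from that matrix, giving $\alpha^{(i)}+\beta^{(i)}=2\bar\alpha$ at any interior vertex of $l^{(i)}$ without a separate balance argument. In short: the proof is valid and, unlike the paper, actually establishes the statement rather than citing it.
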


This is nothing but the "Kirchhoff's current law" for the quantity $J(a)$. Similarly, a law corresponding to the  "Kirchhoff's voltage law" is also shown, but it is omitted here. It is important to note that these law on $J(a)$ allows us to calculate the limit amplitudes.

In the next section, we use the theorem introduced in this section to describe a new theorem for the limit amplitudes of quantum walks on jellyfish graphs.

\section{Result}

We made the following speculation based on the physical insight of the dressed photons and the analogy with the quantum walk:

\begin{quote}
Consider the Grover walk on a jellyfish graph with the initial amplitudes ${\alpha}^{(1)},{\alpha}^{(2)},...{\alpha}^{(m-1)}$ on the arrows in $l^{(1)}....l^{(m-1)}$ directed to $G^{(0)}$. Then the difference between the sum of the square of the absolute value of limit amplitudes on arrows into a vertex and the one on arrows out of it, which will correspond to the (not normalized) probability ``around the vertex'', will be maximized at the vertex with the maximum sum of the amplitudes of arrow out of it into the exterior of $G^{(0)}$. 
\end{quote}

Based on this speculation, we obtain the following simple theorem. The physical meaning of quantity $P^{0}(v)$ defined in the statement is the net amplitude of dressed photon around the vertex $v$ (For the detailed physical interpretation of the theorem, see the next section) .

\begin{theorem}
Let $P^{(0)}(v)$ be the quantity defined as 



\[
\displaystyle P^{(0)}(v)=\sum_{t(a)=v,a\in G^{(0)}}|\psi(a)|^2-\sum_{t(a)=v,a\in G^{(0)}}|\psi(\bar{a})|^2.
\]
For any $v\in V_{G^{(0)}}$

\[
P^{(0)}(v)=4ave(\alpha^{(1)},
\alpha^{(2)},\dotsb,\alpha^{(m)})J^{out}(v)\\
\]
holds, where $\alpha^{(1)},\alpha^{(2)},\dotsb,\alpha^{(m)}$ are real numbers 
and $J^{out}(v)$ denotes the quantity defined as
\[
J^{out}(v)=\sum_{o(a)=v,a\in {G/G^{(0)}}}J(a)
\]
which is equal to $\displaystyle \sum_{t(a)=v,a\in G^{(0)}}J(a)$ by Kirchhoff's law.
\end{theorem}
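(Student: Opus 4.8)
The plan is to reduce everything to the quantity $J(a)=\psi_\infty(a)-\bar{\alpha}$, where I abbreviate $\bar{\alpha}:=ave(\alpha^{(1)},\dots,\alpha^{(m)})$, and then to exploit the two Kirchhoff identities already established. First I would observe that since the initial amplitudes $\alpha^{(i)}$ are real and the Grover coin matrix has only real entries, the whole evolution preserves reality, so every limit amplitude $\psi(a)=\psi_\infty(a)$ is real and $|\psi(a)|^2=\psi(a)^2$. By the definition of $J$ we may then write $\psi(a)=J(a)+\bar{\alpha}$ for every arrow $a$.

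The heart of the argument is a difference-of-squares computation at the level of a single opposite pair $\{a,\bar{a}\}$. Using the identity $J(\bar{a})=-J(a)$ from the first part of the Kirchhoff-type theorem, I would compute
\[
|\psi(a)|^2-|\psi(\bar{a})|^2=(J(a)+\bar{\alpha})^2-(-J(a)+\bar{\alpha})^2=4\bar{\alpha}\,J(a).
\]
Summing this identity over all arrows $a\in G^{(0)}$ with $t(a)=v$ immediately gives
\[
P^{(0)}(v)=4\bar{\alpha}\sum_{t(a)=v,\,a\in G^{(0)}}J(a),
\]
so the theorem is reduced to identifying the remaining sum with $J^{out}(v)$.

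For this last step I would use the second (current-law) part of the theorem, $\sum_{t(a)=v}J(a)=0$, where the sum ranges over all arrows into $v$. Splitting that sum into arrows lying in $G^{(0)}$ and arrows lying in the half-lines $G/G^{(0)}$ gives $\sum_{t(a)=v,\,a\in G^{(0)}}J(a)=-\sum_{t(a)=v,\,a\in G/G^{(0)}}J(a)$. On the half-line side, the reversal $a\mapsto\bar{a}$ is a bijection between arrows into $v$ and arrows out of $v$ within $G/G^{(0)}$, and $J(\bar{a})=-J(a)$, so $\sum_{t(a)=v,\,a\in G/G^{(0)}}J(a)=-\sum_{o(a)=v,\,a\in G/G^{(0)}}J(a)=-J^{out}(v)$. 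Combining the two displays yields $\sum_{t(a)=v,\,a\in G^{(0)}}J(a)=J^{out}(v)$, and hence $P^{(0)}(v)=4\bar{\alpha}\,J^{out}(v)$, as claimed.

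There is no deep obstacle here: the only place demanding care is the bookkeeping in this final step, where one must correctly partition the incoming arrows at $v$ and keep track of the signs introduced by the arrow-reversal bijection — this is exactly the identity the statement asserts ``by Kirchhoff's law''. Everything else is the elementary factorization of a difference of squares, made nontrivial only by the need to first justify that the limit amplitudes are real so that $|\psi(a)|^2=\psi(a)^2$.
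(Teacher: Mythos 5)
Your proof is correct and follows essentially the same route as the paper's: substitute $\psi(a)=J(a)+\bar{\alpha}$ and $\psi(\bar{a})=-J(a)+\bar{\alpha}$, expand the difference of squares to get $4\bar{\alpha}J(a)$, and sum over incoming arrows in $G^{(0)}$. Your only additions are the explicit justification that the limit amplitudes are real and the careful sign bookkeeping behind the identity $\sum_{t(a)=v,\,a\in G^{(0)}}J(a)=J^{out}(v)$, which the paper simply asserts ``by Kirchhoff's law'' in the statement; both are welcome but do not change the argument.
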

\begin{proof}

Let us denote $ave(\alpha^{(1)},
\alpha^{(2)},\dotsb,\alpha^{(m)})$ as $ave$.

$\displaystyle 
P^{(0)}(v)=\sum_{t(a)=v,a\in G^{(0)}}|\psi(a)|^2-\sum_{t(a)=v,a\in G^{(0)}}|\psi(\bar{a})|^2$\\
$\displaystyle
\ \ \ \ \ \ \ \  =\sum_{t(a)=v,a\in G^{(0)}}\{|J(a)+ave|^2-|-J(a)+ave|^2\}$\\
$\displaystyle
\ \ \ \ \ \ \ \ \ =\sum_{t(a)=v,a\in G^{(0)}}\{(J(a)^2+2J(a)ave+ave^2)-(J(a)^2-2J(a)ave+ave^2)\}$\\
$\displaystyle
\ \ \ \ \ \ \ \ =\sum_{t(a)=v,a\in G^{(0)}}\{J(a)^2+2J(a)ave+ave^2-J(a)^2+2J(a)ave-ave^2\}$\\
$\displaystyle
\ \ \ \ \ \ \ \ =\sum_{t(a)=v,a\in G^{(0)}}4J(a)ave$\\
$\displaystyle
\ \ \ \ \ \ \ \ =4ave\sum_{t(a)=v,a\in G^{(0)}}J(a)$\\
\ \ \ \ \ \ \ \ $=4ave J^{out}(v)=4ave(\alpha^{(1)},
\alpha^{(2)},\dotsb,\alpha^{(m)})J^{out}(v)$.
\end{proof}

\begin{remark}
We can easily generalize the theorem and proof above for any complex amplitudes.(Omitted here for simplicity.)

\end{remark}



\section{Discussion}

Let us consider the physical meaning of the theorem above.

Quantum walks on jellyfish graphs are considered to correspond to a physical model in which light enters a nanoparticle system from a distance, becomes a dressed photon for a while, and then emitted into far field. 

If this modeling is appropriate, then the above theorem means:
\begin{quote}
Dressed photons cluster around vertices that emit a large amount of outflow into the distance.
\end{quote}
In other words, the dressed photons gather "in a self-compensating manner" where there is a large "dissipation"(into the far field).

In fact, many phenomena being coherent to the result above have been confirmed experimentally. For example, a fiber probe used as a "generator" of the dressed photons receives light from its root and emits light to the external environment mostly at its surface, especially at its tip, where it is experimentally known that the density of dressed photons are known highest\cite{PYKFO}. 

Similarly, when visible light (not ultraviolet light,) enters the surface of a substance in contact with a gas such as chlorine or oxygen, a chemical reaction that cannot occur at the original frequency due to low energy occurs at the tip of a nanostructure similar to a fiber probe, and the sharp part is polished autonomously\cite{YNNO}. It has also been put to practical use as a nanometer scale polishing technology. Another example is a device called "optical nanofountain" that uses nanoparticles of different sizes\cite{KKO}. By limiting the place where the light is emitted, energy is allowed to flow into the place autonomously by the movement of the dressed photons. If we assume that modeling as a quantum walk on a jellyfish graph is appropriate, these phenomena can be understood as a consequence of the theorem presented above, not only qualitatively but also quantitatively.

On the other hand, these phenomena cannot be understood as "random walk". This is because when we consider the "random walk" on a jellyfish graph, the probability in the limits is equal on all edges (The amount $J$ is, so to speak, the amount of deviation from "random walk" and it can be said that it measures a kind of "symmetry breaking"). Then, only the number of edges to which the vertices are connected, or "degree" is important, which contradicts the importance of dissipation. This also contradicts the fact that the density of the dressed photons is high at "vertices with rather small degrees", such as the endpoints. 
These phenomena suggest that the model of quantum walk on a jellyfish graph is appropriate for the analysis of the behavior of dressed photons.









\section*{Acknowledgment}
The authors thank Prof. Motoichi Ohtsu and Prof. Izumi Ojima for the fruitful discussions on dressed photons and quantum fields. They also thank Mr. Suguru Sangu and Mr. Etuso Segawa for the discussions on the modelling by quantum walks. This work was supported by Research Origin for Dressed Photon.

\nocite{*}
\bibliographystyle{eptcs}
\bibliography{generic}
\end{document}